\begin{document}
\title{Convolutional Polar Coded Modulation}

\author{Kangjian Qin, ~\IEEEmembership{Member,~IEEE, }
and Zhaoyang Zhang, ~\IEEEmembership{Member,~IEEE}
\thanks{K. Qin (e-mail: antonyqkj@163.com) is with the College of Information Science and Electronic Engineering, Zhejiang University, Hangzhou 310027, China, and is also with the Department of Electrical-Electronics Engineering, Bilkent University, Ankara, 06800, Turkey. Z. Zhang (Corresponding Author, e-mail: zhzy@zju.edu.cn) is with the College of Information Science and Electronic Engineering, Zhejiang University, Hangzhou 310027, China, and is also with the International Joint Innovation Center, Zhejiang University, Haining 314400, China.}
\thanks{This work was supported in part by the China Scholarship Council, National Natural Science Foundation of China under Grant 61725104, National Key R\&D Program of China under grant 2018YFB1801104.}
}

\maketitle

\begin{abstract}
$2^m$-ary modulation creates $m$ bit channels which are neither independent nor identical, and this causes problems when applying polar coding because polar codes are designed for independent identical channels.
Different from the existing multi-level coding (MLC) and bit-interleaved coded modulation (BICM) schemes, this paper provides a convolutional polar coded modulation (CPCM) method that preserves the low-complexity nature of BICM while offering improved spectral efficiency. Numerical results are given to show the good performance of the proposed method.
\end{abstract}

\begin{keywords}
Multi-level polar coding, Bit-interleaved polar coded modulation, Convolutional polar coded modulation.
\end{keywords}

\section{Introduction}
\PARstart{C}{oded} modulation is of vital importance in spectral efficient communications. For better performance, polar code \cite{Channel_polarization_Arikan}, which provably achieves the symmetric capacity of binary-input discrete memoryless channels (BDMCs), has been considered with higher-order modulation in various works \cite{PCM_Huber, Compound_Polar_Mahdacifar_2013, BICM_Mahdavifar_2016, Arikan_2015}. In this paper, we aim to further improve the spectral efficiency of polar coded modulation.

Fig. \ref{flowchart} shows our system model. We consider transmission over discrete-time memoryless AWGN channel with real number output, thus one dimensional constellation, i.e., pulse-amplitude modulation (PAM), is used in this paper. With $2^m$-ary PAM, $m$ bit channels are created. These bit channels do not have the same quality, moreover, they are not independent. In applying polar coding, this cause problem because polar codes are designed for independent identical channels. Multi-level coding (MLC) circumvents this problem by splitting the $2^m$-ary channels into $m$ layers of bit channels and using multiple codes that are tailored to each layer \cite{New_MLC_Imai_1997,MLC_Huber_1999}, however, it suffers form complexity and latency. Bit-interleaved coded modulation (BICM) scheme ignores the dependencies among bit channels and uses one single unified code over $2^m$-ary channel, however, this is not rigorous \cite{BICM_Zehavi_1992,BICM_Caire_1998}. Following the idea of MLC and BICM, various polar coded modulation schemes were proposed in \cite{PCM_Huber, Compound_Polar_Mahdacifar_2013, BICM_Mahdavifar_2016}, and a summary of polar coded modulation was given in \cite{Arikan_2015}.

Here, we take a different approach to solve the above problem faced by polar coding. Take $4$-PAM for example, each symbol $s\in\{-3A,-A,A,3A\}$ can be labeled by two bits $\{v_1,v_2\}$. There are totally $4!$ labeling rules and we give an example of the set-partition labeling in Fig. \ref{SPL}. The bit channel experienced by $v_1$ is denoted as B, and the bit channel experienced by $v_2$ is denoted as G, and their capacities are denoted as $I(B)$ and $I(G)$, respectively. We plot $I(B)$ and $I(G)$ using both MLC and BICM principles in Fig. \ref{4PAM_Bit_level_capa}, where for MLC $I(B)=I(V_1;Y)$ and $I(G)=I(V_2;Y|V_1)$; and for BICM $I(B)=I(V_1;Y)$ and $I(G)=I(V_2;Y)$. The channel B is less reliable than G in the sense that $I(B)<I(G)$.
To apply polar coding, we first study the construction of polar codes under independent non-identical channels. As shown in Fig. \ref{Mapping}, when there are two types of independent bit channels, we found that interleaver-$2$ induces the lowest block error rate (BLER) union bound for a polar code, whereas interleaver-$1$ and other random interleavers in general give worse BLER performance. We also found that when there are $m$ ($m\geq 2$) types of channels, this conclusion holds generally for polar codes with medium to long block-length. Based on this, we construct polar codes by combining different bit channels at each polarization kernel as long as possible in this paper.

Also note that B and G are not independent as they come from the same symbol $s$. To deal with such dependency, a convolutional polar coded modulation (CPCM) method is proposed, and we give a toy example in Fig. \ref{cpcm4pam} for $4$-PAM: given a polar code $a_1^N=\{a_1,\dots,a_N\}$, we propose to map its even-indexed bits to the good bit channels in current transmission block, whereas its odd-indexed bits are mapped to the bad channels in next transmission block. In this manner, all the bit channels used by a polar code are independent. Moreover, all the involved polar codes share one single unified construction. This idea is then generalized to $2^m$-ary PAM cases with $m\geq 2$. Finally, we indicate that the proposed CPCM method asymptotically achieves the constellation capacity.

\section{System Model}\label{sec:2}
\begin{figure}[t]\centering
\includegraphics[width=3.3 in]{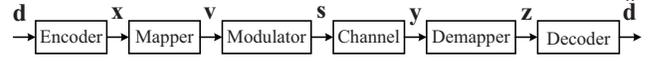}
\caption{\small The high level diagram of coded modulation.}
\label{flowchart}
\vspace{-0.3cm}
\end{figure}

\begin{figure}
\centering
\includegraphics[width=2.25 in]{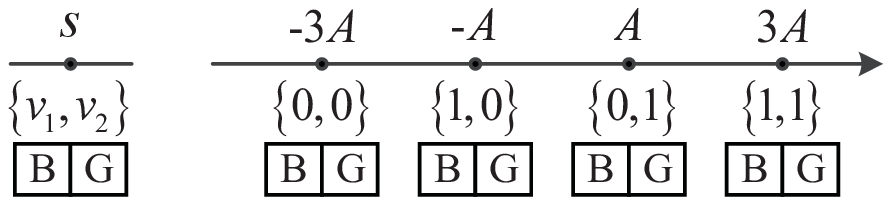}
\caption{\small Set-partition labeling for 4-PAM.}
\vspace{-0.5cm}
\label{SPL}
\end{figure}

\begin{figure}[t]\centering
\includegraphics[width=2.2 in]{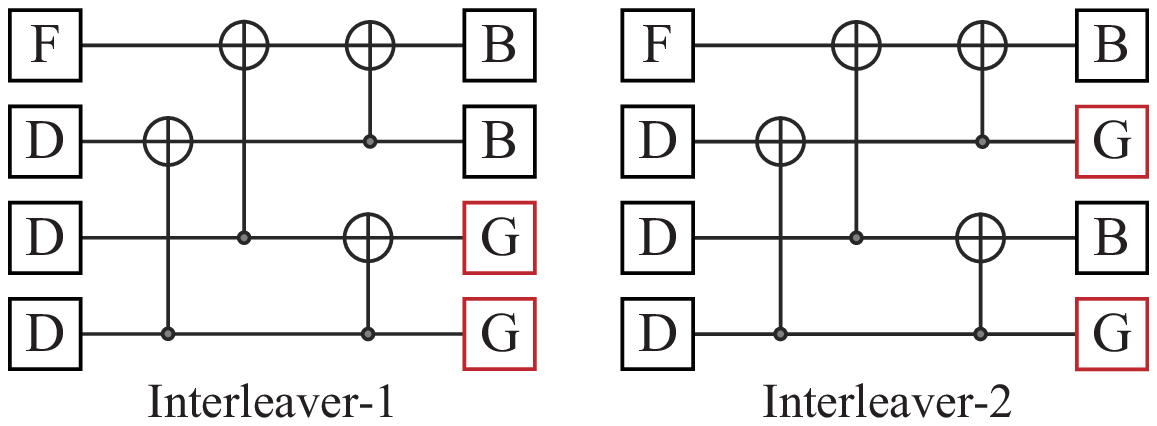}
\caption{\small Polar coding for independent non-identical channels.}
\label{Mapping}
\vspace{-0.3cm}
\end{figure}

\begin{figure}[h]\centering
\includegraphics[width=2.2 in]{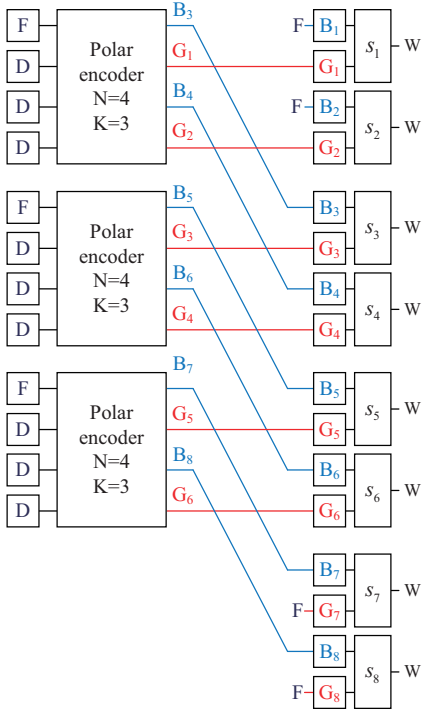}
\caption{\small A toy example of CPCM for 4-PAM.}
\label{cpcm4pam}
\vspace{-0.5cm}
\end{figure}

In Fig. \ref{flowchart}, a message source produces $L$ source words $\mathbf{d}=\{\mathbf{d}^{(1)},\dots,\mathbf{d}^{(L)}\}$, with the $l$-th ($1\leq l\leq L$) source word $\mathbf{d}^{(l)}=\{d_1^{(l)},\dots,d_K^{(l)}\}$ generated uniformly at random over all possible source words of length $K$ over a binary field. These source words are then encoded into $L$ binary polar codes $\mathbf{x}=\{\mathbf{x}^{(1)},\dots,\mathbf{x}^{(L)}\}$, where the $l$-th codeword $\mathbf{x}^{(l)}=\{x_1^{(l)},\dots,x_N^{(l)}\}$ has a block-length $N$, thus inducing a code rate $R=K/N$. Then, these encoded bits are mapped to $\mathbf{v}=\{\mathbf{v}^{(1)},\dots,\mathbf{v}^{(T)}\}$ as the input sequence of the modulator, and its $t$-th ($1\leq t\leq T$) element $\mathbf{v}^{(t)}=\{v_1^{(t)},\dots,v_m^{(t)}\}$ is modulated to a $2^m$-PAM symbol $s_t$. These $T$ symbols $\mathbf{s}=\{s_1,\dots,s_T\}$ are then sent through the channel and $\mathbf{y}=\{y_1,\dots,y_T\}$ is the real number channel output. With these output, the demapper generates the soft information of $\mathbf{x}=\{\mathbf{x}^{(1)},\dots,\mathbf{x}^{(L)}\}$, i.e., $\mathbf{z}=\{\mathbf{z}^{(1)},\dots,\mathbf{z}^{(L)}\}$. Finally, the decoder processes $\mathbf{z}$ and produces an estimate of source words as $\mathbf{\hat{d}}=\{\mathbf{\hat{d}}^{(1)},\dots,\mathbf{\hat{d}}^{(L)}\}$. Specially, in the modulation module, two specific labeling rules in \cite{PCM_Huber}, i.e., set-partition labeling and Gray labeling rules, are considered.

\begin{figure}[t]
\centering
\includegraphics[width=3.3 in]{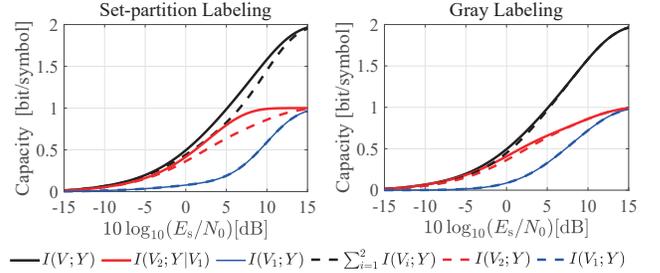}
\caption{\small The constellation capacity of 4-PAM channel and its bit channel capacities using different labeling rules. The solid lines correspond to MLC method, whereas the dashed lines correspond to BICM method.}
\label{4PAM_Bit_level_capa}
\vspace{-0.25cm}
\end{figure}

Although the choice of labeling rules does not effect the constellation capacity of $I(V;Y)$, it does have a strong impact on bit channel capacities. As shown in Fig. \ref{4PAM_Bit_level_capa}, set-partition labeling induces larger mutual information increase from $I(V_1;Y)$ to $I(V_2;Y|V_1)$ compared to Gray labeling, thus is a better choice for schemes that profit from information transfer among bit channels, like MLC. Whereas Gray labeling generates bit channels that are as independent as possible, thus is more suitable for schemes that try to alleviate the dependency created by modulation, like BICM. In CPCM, set-partition labeling is selected as it gives the best performance.

\section{Polar Coding with independent non-identical channels}\label{sec:3}

The authors in \cite{PolarBlockFading_SLiu_icc} have discussed how to map polar coded bits to different channels. They proved that interleaver-$2$ minimizes the sum of even-indexed Bhattacharyya parameters after the first polarization stage, i.e., $\sum_{i=2,4,\dots,N}{Z_1^{(i)}}$, and indicated that optimizing the performance of even-indexed channels in the first stage creates a good starting point for the whole polarization process.

In the following, we investigate the BLER union bound of polar code with interleaver-$1$, interleaver-$2$ and at most $100$ random interleavers over $m$ types of channels.
When $m$ is not of power 2, e.g., $m=3$, polar codes with block-lengths $N=2^n$ become incompatible in the sense that the number of channel uses for each channel type is not an integer. Nevertheless, one can always shorten the current codeword $a_1^N$ into a polar code with block-length $N_s$ that is divisible by $m$. Without loss of generality, we consider half-rate polar codes with $m$ different channels $\{W_1,\ldots,W_m\}$ of a constant average capacity of $\frac{1}{m}\sum_{i=1}^{m}{I(W_i)}=0.7$.

\begin{table*}[htbp]
\caption{\small $20$ Random instances of Bi-AWGN channels $\{W_1, W_2, W_3, W_4| I(W_1)+I(W_2)+I(W_3)+I(W_4)=2.8\}$ }\label{table2}
\vspace{-0.15cm}
\resizebox{\textwidth}{8.5mm}{
\begin{tabular}[h]{|c|c|c|c|c|c|c|c|c|c|c|c|c|c|c|c|c|c|c|c|c|}
\hline
Index    & 1 & 2 & 3 & 4 & 5 & 6 & 7 & 8 & 9 & 10 & 11 & 12 & 13 & 14 & 15 & 16 & 17 & 18 & 19 & 20\\
\hline
$I(W_1)$ &0.65&0.75&0.45&0.65&0.55&0.85&0.25&0.85&0.25&0.15&0.85&0.55&0.75&0.85&0.75&0.45&0.75&0.85&0.85&0.85 \\
\hline
$I(W_2)$ &0.41&0.52&0.78&0.89&0.58&0.56&0.8&0.64&0.7&0.97&0.71&0.39&0.95&0.9&0.83&0.98&0.19&0.8&0.4&0.85\\
\hline
$I(W_3)$ &0.75&0.65&0.65&0.4&0.95&0.8&1&0.85&0.95&0.95&0.7&1&0.15&0.75&0.4&0.4&1&0.25&1&1\\
\hline
$I(W_4)$ &0.99&0.88&0.92&0.86&0.72&0.59&0.75&0.46&0.9&0.73&0.54&0.86&0.95&0.3&0.82&0.97&0.86&0.9&0.55&0.1\\
\hline
\end{tabular}}
\vspace{-0.5cm}
\end{table*}

We start with $m=2$ binary erasure channels (BECs) $\{W_1,W_2\}$, whose Bhattacharyya parameters are $Z_1$ and $Z_2$, respectively.
Fig. \ref{BLER_Union_2BEC} shows the BLER union bound $\sum_{i\in\mathcal{A}}{Z_n^{(i)}}$ for polar codes with different block-lengths. It can be observed that although interleaver-$2$ is not globally optimal for minimizing $\sum_{i\in\mathcal{A}}{Z_n^{(i)}}$, as the block-length gets large, interleaver-$2$ always generates the lowest BLER union bound.

\begin{figure}[t]
\centering
\includegraphics[scale=0.21]{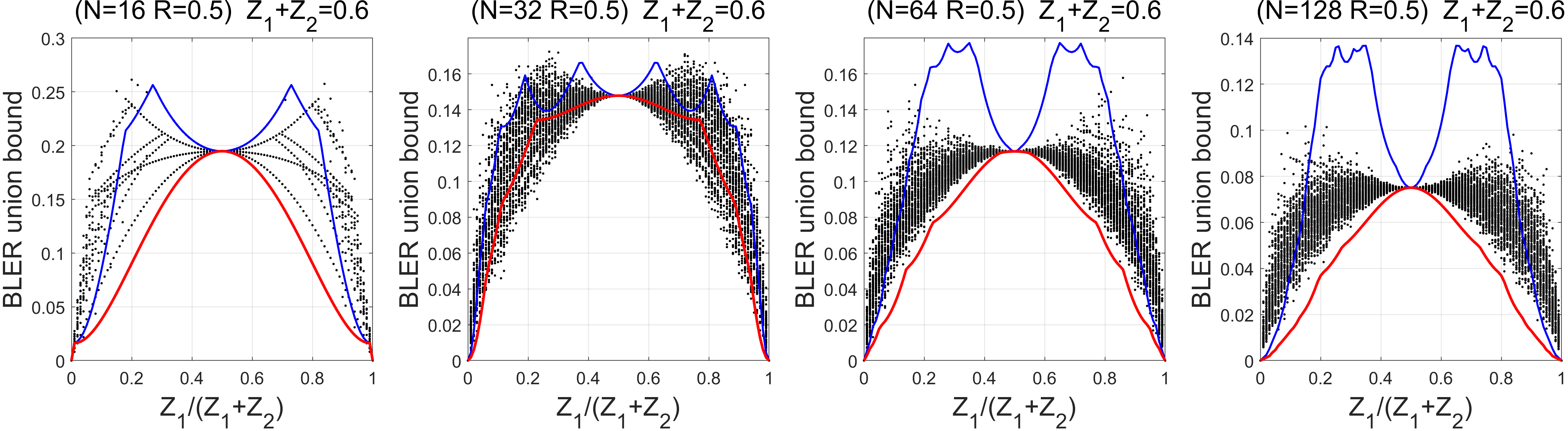}
\caption{ \small The BLER union bound under $2$ types of BECs. Red line corresponds to interleaver-2, blue line corresponds to interleaver-1, and black dots are generated with at most 100 random interleavers.}
\label{BLER_Union_2BEC}
\vspace{-0.5cm}
\end{figure}

For AWGN channels, it is difficult to precisely track the Bhattacharyya parameters of each polarized bit-channel, so we use Gaussian approximation (GA) \cite{GA_Urbanke_2001} to recursively calculate the error probabilities of bit-channels $\{P_{\rm{e}}(W_N^{(i)})|i=1,...,N\}$, thus the BLER union bound can be estimated as $\sum_{i\in\mathcal{A}}{P_{\rm{e}}(W_N^{(i)})}$. For $2$ AWGN channels, similar observation like Fig. \ref{BLER_Union_2BEC} is obtained, hence is omitted here.
For $m=\{3, 4\}$ cases, as the average capacity of these $m$ channels is set to be a constant, once the capacities of $(m-1)$ channels are determined, the remaining one will be determined as well. This enables us to plot the BLER union bound under $3$ different channels in a $3$-dimension plot, as shown in Fig. \ref{Union_bound_GA_3AWGNs}. However, when $m=4$, it is hard for us to arrange all the possible combinations of $\{I(W_1), I(W_2), I(W_3), I(W_4)\}$ in one plot, so we randomly select $20$ instances from all possible values of $\{I(W_i)|i=1,2,3,4\}$ in Table \ref{table2}, and plot their corresponding BLER union bounds in Fig. \ref{Union_bound_GA_4_AWGNs}. It can be observed that the conclusion we drawn from $2$ independent non-identical BECs can be extended to $m=\{3,4\}$ independent non-identical AWGN channels as well, where interleaver-$2$ generally generates the lowest BLER union bounds for polar codes with practical block-lengths (e.g., $N\geq 128$).

\begin{figure}[t]
\centering
\subfigure[$m=3$]{
\begin{minipage}[b]{0.5\textwidth}
\centering
\includegraphics[width=3.4 in]{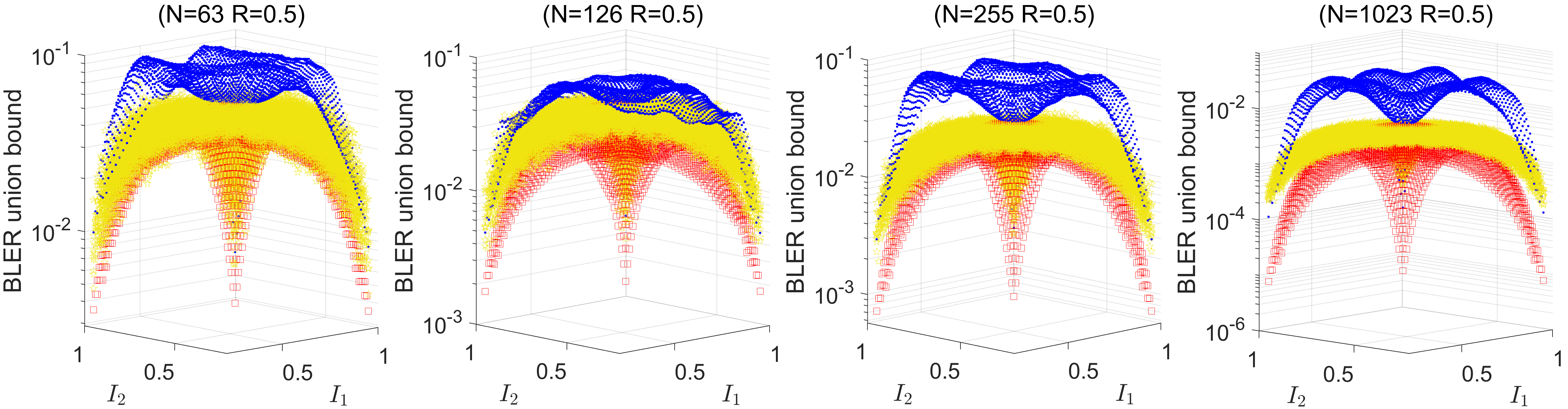}
\label{Union_bound_GA_3AWGNs}
\end{minipage}%
}%
\vspace{-0.25cm}
\centering
\subfigure[$m=4$]{
\begin{minipage}[b]{0.5\textwidth}
\centering
\includegraphics[width=3.4 in]{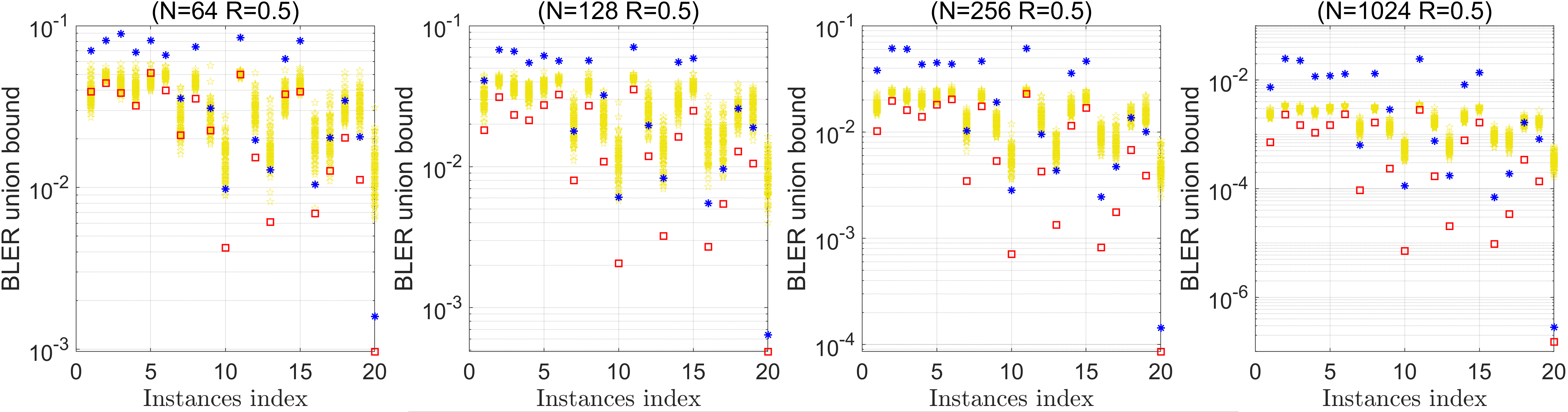}
\label{Union_bound_GA_4_AWGNs}
\end{minipage}%
}%
\vspace{-0.25cm}
\caption{ \small BLER union bound under $m$ types of AWGN channels. Red squares are generated with interleaver-2, blue dots are generated with interleaver-1, and yellow stars are generated with at most 100 random interleavers.}
\vspace{-0.5cm}
\end{figure}

\section{Convolutional Polar Coded Modulation}\label{sec:4}

\subsection{CPCM for 4-PAM}
We start with a toy example in Fig. \ref{cpcm4pam}, where three polar codes $\{\mathbf{x}^{(1)},\mathbf{x}^{(2)},\mathbf{x}^{(3)}\}=\{a_1^N, b_1^N, c_1^N\}$ are transmitted. With MLC principle, each $4$-PAM channel W can be divided into a bad bit channel B with capacity $I(V_1;Y)$ and a good bit channel G with capacity $I(V_2;Y|V_1)$. Also, we specify symbols in $\{s_{(k-1)\frac{N}{2}+1},\dots,s_{k(\frac{N}{2})}\}$ as the $k$-th ($k=\{1,2,3,4\}$) transmission block, thus the channels B and G in the $k$-th transmission block can be explicitly indicated by $v_1^{(t)}$ and $v_2^{(t)}$ with $t=\{(k-1)\frac{N}{2}+1,\dots,k(\frac{N}{2})\}$, respectively.

To transmit $a_1^N$, we map its even-indexed bits to the good channels in the first transmission block, i.e., $\{v_2^{(t)}=a_{i}|i\mod{2}=0;t=1,\dots,\frac{N}{2}\}$, whereas its odd-indexed bits are mapped to the bad channels in the second transmission block by $\{v_1^{(t)}=a_{i}|i\mod{2}=1;t=\frac{N}{2}+1,\dots,N\}$. As for the bad channels in the first transmission block, they are frozen to zeros or ones with equal probability by $\{v_1^{(t)}=\texttt{rand}(0,1)| t=1,\dots,\frac{N}{2}\}$\footnote{One may also set them to some predetermined values, e.g., all-zeros. This is because set-partition labeling rule is used in CPCM, so no matter what value is on a bad bit channel, once it is determined, all the remaining symbols will have an equal Euclidean distance between each other, so the frozen values make no difference in decoding performance.}.
Similarly, to transmit $b_1^N$, we map its even-indexed bits to the good channels in the second transmission block by $\{v_2^{(t)}=b_{i}|i\mod{2}=0;t={\frac{N}{2}+1},\dots,{N}\}$, and its odd-indexed bits are mapped to the bad channels in the third transmission block by $\{v_1^{(t)}=b_{i}| i\mod{2}=1; t=N+1,\dots,\frac{3N}{2}\}$. Such procedure is continued until the last codeword is reached. In this manner, all the polar codes are transmitted over independent bit channels.

Note that for the last codeword in Fig. \ref{cpcm4pam}, i.e., $c_1^N$, instead of freezing the good channels in the last transmission block, better performance can be achieved by freezing the bad channels $B_7$ and $B_8$ while mapping the odd-indexed bit of $c_1^N$ to the good channels $G_7$ and $G_8$. However, this makes $c_1^N$ inconsistent with the former transmitted codewords. To maintain the consistency with previous codewords, all codewords are transmitted in a unified manner in this paper.

For decoding, a bidirectional decoding is adopted. To be more specific, we first start form the first codeword $a_1^N$: once the first transmission block $\{s_{1},\dots,s_{\frac{N}{2}}\}$ is received, the even-indexed LLRs of $a_1^N$ can be calculated directly from $\{v_2^{(t)}|t=1,\dots,\frac{N}{2}\}$ as the values on $\{v_1^{(t)}|t=1,\dots,\frac{N}{2}\}$ are frozen. For its odd-indexed LLRs, they can be obtained from $\{v_1^{(t)}|t=\frac{N}{2}+1,\dots,N\}$ by taking the values on $\{v_2^{(t)}|t=\frac{N}{2}+1,\dots,{N}\}$ as noise. For now, all the LLRs of $a_1^N$ are ready and one can decode $a_1^N$ with successive cancellation (SC) decoding. Once $a_1^N$ is recovered, the values on $\{v_1^{(t)}|t=\frac{N}{2}+1,\dots,{N}\}$ are known, so the even-indexed LLRs of $b_1^N$ can be retrieved directly from the second transmission block. For its odd-indexed LLRs, they can be obtained from $\{v_1^{(t)}|t=N+1,\dots,\frac{3N}{2}\}$ by taking the values on $\{v_2^{(t)}|t=N+1,\dots,\frac{3N}{2}\}$ as noise, thus $b_1^N$ can be recovered as well. Finally, with the decoded values of $\{b_i|i\mod{2}=1\}$ and the frozen bits on $\{v_2^{(t)}| t=\frac{3N}{2}+1,\dots,2N\}$, one can retrieve the even-indexed LLRs and the odd-indexed LLRs of $c_1^N$ from the third transmission block and the forth transmission block, respectively, and $c_1^N$ can be decoded.

To reduce the susceptibility of CPCM to error propagation under large $L$, if an error occurred in the above forward decoding, we record the index of the current erroneous block as $\texttt{Idx\_forward}$ and start decoding from the last codeword. For example, suppose that $b_1^N$ encountered a decoding failure, then we set $\texttt{Idx\_forward}=2$ and start decoding from $c_1^N$. As the values on $\{v_2^{(t)}| t=\frac{3N}{2}+1,\dots,2N\}$ have been frozen, the LLRs of $\{c_i| i \mod2=1\}$ can be calculated directly. For the LLRs of $\{c_i| i \mod2=0\}$, they can be obtained by taking the values on $\{v_1^{(t)}|t=N+1,\dots,\frac{3N}{2}\}$ as noise since $b_1^N$ was wrongly decoded. Now we have the whole LLRs of $c_1^N$, and if $c_1^N$ can be correctly decoded, then the LLRs of the odd-indexed bits in $b_1^N$ can be updated and another attempt of decoding can be activated for $b_1^N$. During such backward decoding process, if a decoding failure happens again, then we record the index of the current erroneous codeword as $\texttt{Idx\_backward}$, and try to decode the codewords from $\texttt{Idx\_forward}$ to $\texttt{Idx\_backward}$ by retrieving the LLRs of each bit from its transmitted symbol while taking the other irrelevant bits in this symbol as noise. This further reduce the effect of error propagation in the sense that, if there is an erroneous codeword $e_1^N$, then its subsequent codeword is decoded without using any information from $e_1^N$. Moreover, if a codeword after $e_1^N$ is correctly recovered with such method, then we can continue the forward decoding from this codeword and upgrading the subsequent bit channels as indicated earlier.
Note that, except the codewords with indexes $\texttt{Idx\_forward}$ and $\texttt{Idx\_backward}$, all other codewords are decoded only once.
An $r$-bit CRC code can be attached after the $K$-bit data in polar code for error detection. However, this degrades the BLER performance with SC decoding as the code rate seen by a SC decoder is actually $(K+r)/N$. In practice, CRC-aided polar code is usually used together with SC list decoders \cite{CA_SCL_KNiu} to assist error correction.

\subsection{CPCM for general $2^m$-PAM}
Now we generalize the above CPCM scheme to $2^m$-PAM constellation with $m\geq 2$. We propose to map a super codeword $\mathbf{x}=\{\mathbf{x}^{(1)},\dots,\mathbf{x}^{(L)}\}$ to a set of $m$-ary tuples $\mathbf{v}=\{\mathbf{v}^{(1)},\dots,\mathbf{v}^{(T)}\}$ with $T=(L+m-1)\frac{N}{m}$, and then modulate these $T$ tuples to $T$ symbols of $2^m$-PAM.

To start, we divide the whole $T$ symbols into $\frac{Tm}{N}$ transmission blocks such that each transmission block has $\frac{N}{m}$ symbols. Then, for the $l$-th codeword $\mathbf{x}^{(l)}$, we map its bits with indexes $\{i|i\mod{m}=q, q=0,...,m-1\}$ to the $k$-th transmission block with $k=l+(m-q)\mod{m}$. To be more specific, the $i$-th ($1 \leq i\leq N$) encoded bit of the $l$-th codeword is mapped to the $j$-th ($1 \leq j\leq m$) element of the $(1+\lfloor{\frac{i-1}{m}}\rfloor)$-th tuple in the $k$-th transmission block, where $i\mod{m}=j\mod{m}$.
In this way, all the bits in a codeword are mapped to different symbols. To summarize, the overall CPCM scheme for $2^m$-PAM is presented in Algorithm \ref{alg:cpcm}.

\begin{algorithm}[!ht]
\LinesNumbered %
\KwIn{$\mathbf{x}=\{x_{1}^{(l)},\dots,x_{N}^{(l)}\},l=\{1,\dots, L\}$}%
\KwOut{$\mathbf{v}=\{v_{1}^{(t)},...,v_{m}^{(t)}\},t=\{1,..., (L+m-1)\frac{N}{m}\}$}%
\For{$t=1;t\leq (L+m-1)\frac{N}{m};t++$}{
	\For{$j=1;j\leq m;j++$}{
		$v_{j}^{(t)}=\texttt{rand}(0,1)$; // Initialization\\
		}
}
\For{$l=1;l\leq L;l++$}{
	\For{$i=1;i\leq N;i++$}{
		\For{$j=1;j\leq m;j++$}{
		$q=(j\mod{m})$;\\
		\If{$q==(i \mod{m})$}{
			$k=l+(m-q)\mod{m}$;\\
			$t=(k-1)\frac{N}{m}+1+\lfloor{\frac{i-1}{m}}\rfloor$;\\
			$v_{j}^{(t)}=x_{i}^{(l)}$;\\				
			}			
		}
	}
}
\caption{CPCM for $2^m$-PAM}
\label{alg:cpcm}
\end{algorithm}

\section{Performance analysis}

For spectral efficiency performance, note that there are some overhead or rate loss in the first and the last transmission block in Fig. \ref{cpcm4pam}. Nevertheless, this can be avoided by making full use of the first transmission block: instead of using an additional transmission block to carry the odd-indexed bits of the last codeword, one can put these bits on the bad channels in the first transmission block. However, this makes the transmission of the codewords inconsistent. In this paper, we admit this overhead and give the following proposition.

\newtheorem{proposition}{\textbf{Proposition}}
\begin{proposition}
\label{proposition1}
CPCM achieves the constellation capacity of $2^m$-PAM as the number of transmitted codewords $L$ and the codeword block-length $N$ both approach infinity.
\end{proposition}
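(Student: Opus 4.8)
The plan is to show that, as $N\to\infty$, the CPCM construction realizes a polar code over a collection of $m$ \emph{independent} bit-channels whose average capacity equals $\frac{1}{m}I(V;Y)$, the constellation capacity of $2^m$-PAM, and that as $L\to\infty$ the per-codeword rate loss from the $m-1$ extra ``boundary'' transmission blocks vanishes. First I would make precise the independence claim: in the CPCM mapping of Algorithm~\ref{alg:cpcm}, the $m$ bits of any single codeword $\mathbf{x}^{(l)}$ are placed, according to their residue $i\bmod m$, into $m$ \emph{distinct} transmission blocks, and within each such block they occupy a single layer-$j$ position of a single symbol. Hence the channel seen by the $i$-th coded bit is the layer-$j$ bit-channel $W_j$ of a $2^m$-PAM symbol whose other $m-1$ bit positions carry either independently decoded bits of other codewords or uniform random frozen bits; because set-partition labeling is used and the decoder processes blocks in the established forward/backward order, the effective channels for the $N$ coded bits of $\mathbf{x}^{(l)}$ are mutually independent, with exactly $N/m$ uses of each $W_j$, $j=1,\dots,m$. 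This reduces the problem to the setting of Section~\ref{sec:3}: a polar code of length $N$ transmitted over $m$ independent (generally non-identical) channels $\{W_1,\dots,W_m\}$, combined in the ``interleaver-$2$'' fashion that groups distinct channel types together at each kernel as long as possible.

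Next I would invoke the channel-combining/polarization argument for this mixed-channel polar transform. The key step is: when a $2\times 2$ Ar\i kan kernel combines two independent channels $W,W'$, the resulting synthetic channels satisfy $I(W^-)+I(W^+) = I(W)+I(W')$ (conservation of mutual information), and iterating the construction with the CPCM interleaving, the synthetic bit-channels polarize, i.e. the fraction with $I(W_N^{(i)})\in(\delta,1-\delta)$ tends to $0$. This is the standard Ar\i kan polarization theorem extended to non-stationary (but independent) underlying channels; a clean way to see it is to note that the combined length-$mN_0$ sequence of channels is a fixed multiset, so the transform is a finite-depth ``mixing'' followed by ordinary polarization, and the martingale $\{I(W_N^{(i)})\}$ still converges to $\{0,1\}$-valued limits with $\mathbb{E}[I_\infty] = \frac{1}{m}\sum_{j=1}^m I(W_j) = \frac{1}{m}I(V;Y)$ by the MLC chain-rule decomposition $I(V;Y)=\sum_{j=1}^m I(V_j;Y\mid V_1^{j-1})$ with the set-partition labeling. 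Placing the information set $\mathcal{A}$ on the high-$I$ (equivalently low-error-probability) synthetic channels then gives, for any $R<\frac{1}{m}I(V;Y)$ and any $\beta<\tfrac12$, a code of rate $R$ with block error probability $o(2^{-N^{\beta}})$ under the CPCM successive-cancellation decoder; I would note that the bidirectional decoding of Section~\ref{sec:4} only improves on plain forward SC, so the same vanishing error bound holds.

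Finally I would handle the rate/overhead accounting. CPCM sends $L$ codewords of length $N$ using $T=(L+m-1)\tfrac{N}{m}$ symbols, so the effective spectral efficiency is $\frac{LK}{T} = \frac{LNR}{(L+m-1)N/m} = mR\cdot\frac{L}{L+m-1} \xrightarrow{L\to\infty} mR$ symbols-worth of bits, i.e. $R$ bits per real dimension per layer; combined with the previous step, letting $R\uparrow \frac{1}{m}I(V;Y)$ shows the achievable rate approaches $I(V;Y)$ bits per $2^m$-PAM symbol, the constellation capacity. Taking $L\to\infty$ first (to kill the boundary overhead) and $N\to\infty$ second (to polarize and push $R$ to the mixed-channel capacity) completes the argument.

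\smallskip
\noindent\textbf{Main obstacle.} The delicate point is the \emph{independence} and \emph{capacity-preservation} claim in the first step under the actual CPCM decoding schedule: one must verify that treating the ``other'' $m-1$ bit positions of each symbol alternately as known (already-decoded) side information or as uniform noise indeed yields, for each coded bit, a genuine memoryless channel equal to the intended MLC layer $W_j$ — in particular that set-partition labeling makes the ``treat-as-noise'' positions cost nothing asymptotically, and that error propagation across the $L$ codewords does not spoil the union bound. Establishing this cleanly (rather than the polarization bookkeeping, which is routine once the channel model is fixed) is where the real work lies.
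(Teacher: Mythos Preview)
Your proposal is correct and in fact considerably more complete than the paper's own proof. The paper's argument consists \emph{only} of your overhead-accounting paragraph: it counts $(L+m-1)$ transmission blocks, computes $R_{\rm cm}=\dfrac{mRL}{L+m-1}$, asserts (without further justification) that the per-codeword rate $R$ tends to capacity as $N\to\infty$, and takes the double limit. Everything you place around that computation---the independence of the $m$ bit-channels seen by a single codeword under the CPCM mapping, the mixed-channel polarization argument with $\frac{1}{m}\sum_j I(W_j)=\frac{1}{m}I(V;Y)$ via the MLC chain rule, and the identification of the decoding-schedule/error-propagation issue as the real obstacle---is absent from the paper's proof. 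So your route is not different so much as strictly more rigorous: the paper treats the capacity-achieving property of polar codes over the induced independent non-identical channels as a black box, whereas you sketch why it holds. One minor remark: the paper literally writes $\lim_{N\to\infty}R=1$ and concludes $R_{\rm cm}\to m$, which taken at face value is only the infinite-SNR case; your formulation $R\uparrow\frac{1}{m}I(V;Y)$ giving $R_{\rm cm}\to I(V;Y)$ is the correct reading of what ``achieves the constellation capacity'' should mean at a fixed SNR.
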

\begin{proof}
According to Algorithm \ref{alg:cpcm}, when transmitting the last codeword $\mathbf{x}^{(L)}$, CPCM maps the bits with indexes $\{i|i\mod{m}=q, q=0,...,m-1\}$ to the $(L+(m-q)\mod{m})$-th transmission block. Thus except the bits with indexes $\{i|i\mod{m}=0\}$ are mapped to the current $L$-th transmission block, all the bits with indexes $\{i|i\mod{m}=q, q=1,...,m-1\}$ are mapped to the transmission blocks with indexes $\{L+1,\dots,L+m-1\}$. On this basis, CPCM needs totally $(L+m-1)$ transmission blocks to transmit $L$ codewords. As each transmission block has $\frac{N}{m}$ symbols, the number of transmitted symbols is $T=(L+m-1)\frac{N}{m}$. With a code rate $R$, each codeword contains $K=NR$ information bits, so the number of transmitted information bits is $LK$, and this leads to an overall coded modulation-rate of
\begin{equation}
R_{\rm{cm}}=\frac{LK}{T}=\frac{mRL}{L+m-1}
\end{equation}

When the block-length of binary polar codes are sufficiently large, the symmetric capacity of binary-input AWGN channel can be achieved in the sense that $\lim\limits_{N\rightarrow\infty}R=1$, then it is straightforward that
$\lim\limits_{N\rightarrow\infty}\lim\limits_{L\rightarrow\infty}{R_{\rm{cm}}=\frac{mRL}{L+m-1}}=m$.
\end{proof}

For complexity analysis, the added complexity of CPCM mainly comes form decoding the codewords with indexes $\texttt{Idx\_forward}$ and $\texttt{Idx\_backward}$, where additional decoding attempts are given. Nevertheless, as more codewords are transmitted, the average decoding complexity for each codeword converges to that of SC decoding.

\section{Numerical results}
In this section, we provide some numerical results and compare the spectral efficiency of CPCM with the existing multi-level polar coding and bit-interleaved polar coded modulation schemes proposed in \cite{PCM_Huber}.

\begin{figure}[!ht]
\centering
  \includegraphics[scale=0.5]{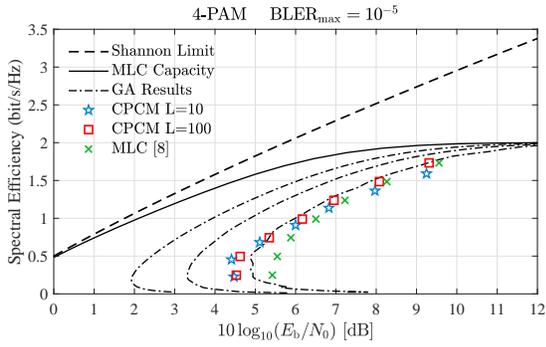}
  \caption{\small Spectral efficiency of CPCM for 4-PAM}
  \label{Spectral_Efficiency_4PAM}
  \vspace{-0.3cm}
\end{figure}

\begin{figure}[!ht]
\centering
  \includegraphics[scale=0.5]{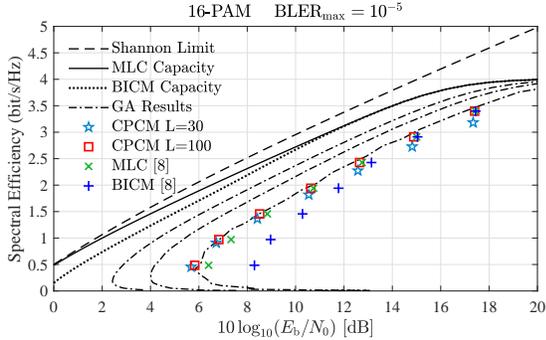}
  \caption{\small Spectral efficiency of CPCM for 16-PAM}
  \label{Spectral_Efficiency_16PAM}
  \vspace{-0.5cm}
\end{figure}

We use interleaver-$2$ and Gray labeling rule for the BICM polar code with block-length $N=512$ as introduced in \cite{PCM_Huber}.
For MLC with $2^m$-PAM, its component polar code has a block-length of $N/m$ so that the overall block-length of MLC is $N=512$, and the code rate of each level is allocated as follows: we first calculate the bit-channel capacities for polar code at each level, then all these bit-channels are concatenated with respect to a nature order from $1$ to $N$, just like the multi-stage decoding order in MLC. Finally, we select the unfrozen bits with higher capacities out of all bit-channels according to the desired overall code rate.
As MLC scheme uses multi-stage decoding, where $m$ polar codes are decoded in a successive manner, the above rate allocation method implicitly allocates a different number of unfrozen bits for each level.

In CPCM, every polar code has a block-length of $N=512$ and shares the same construction as follows: we first calculate the capacities of $m$ bit channels of the $2^m$-PAM constellation with MLC principle as in Fig. \ref{4PAM_Bit_level_capa}, then we map these $m$ types of bit channels to a polar code with interleaver-$2$. Finally, $K=NR$ unfrozen bits are selected by performing density evolution (DE) with GA method \cite{GA_Urbanke_2001}. This procedure is carried out for each value of $E_{\rm{b}}/N_{\rm{0}}$ as in \cite{PCM_Huber}. For ease of comparison, genie-aided SC decoding is adopted to generate the achievable code rate at a target block error rate $\text{BLER}_\text{max}=10^{-5}$ for each marker of ($E_{\rm{b}}/N_{\rm{0}}$, $R_\text{cm}$) in accordance to \cite{PCM_Huber}.
Specially, we transmit $L$ codewords with different code rates of $R=\{\frac{1}{8},\frac{1}{4},\frac{3}{8},\frac{1}{2},\frac{5}{8},\frac{3}{4},\frac{7}{8}\}$ in our simulation.

Fig. \ref{Spectral_Efficiency_4PAM} and Fig. \ref{Spectral_Efficiency_16PAM} show the spectral efficiency performance for $4$-PAM and $16$-PAM, respectively. For each case, we also plot the Shannon bound (under real-valued signals) together with the coded-modulation capacity.
Moreover, the achievable rate of each polar code in CPCM is also plotted for $N=2^n$ (from right to left $n=\{9,12,15\}$) by using GA method, where the curves return back to high $E_{\rm{b}}/N_{\rm{0}}$ values at low rates, as is typical for finite block-length.
For simulated values with small $L$, e.g., $L=10$ and $L=30$ for $4$-PAM and $16$-PAM, respectively, the rate loss of CPCM for both cases is around $9\%$, and this degrades the performance when code rates are relatively high. Nevertheless, when $L$ is large, e.g., $L=100$, the rate loss of CPCM is less than $1\%$ and $3\%$ for $4$-PAM and $16$-PAM, respectively, and it can be observed that CPCM requires less power to achieve a certain rate compared to multi-level polar coding for $4$-PAM, and successfully competes with multi-level polar coding while achieving a significant gain over bit-interleaved polar coded modulation for $16$-PAM.
One may consider the $L$ polar codes in CPCM as a super codeword with block-length $LN$. However, as all the polar codes used in our scheme share one unified construction with block-length $N$, CPCM can be carried out with a encoder and a decoder of size $N$. In this sense, CPCM is more suitable for cases with limited hardware resources.

In summary, this simulation study confirms that CPCM can achieve better spectral efficiency compared to the existing polar coded modulation methods under a fixed number of hardware resources, and the performance improvement is especially pronounced at low $E_{\rm{b}}/N_{\rm{0}}$ values where the coded modulation-rates are relatively low.

\section{Discussion}\label{sec:5}
CPCM is expected to achieved higher achievable rate than its MLC counterpart since a longer (by a factor $m$) codeword can be implemented. Our results are consistent with this expectation.
Compared with BICM, CPCM is more rigorous as the underlying channels used by a polar code are exactly independent. Although one-dimensional constellations over real number AWGN channel are considered in this paper, the extensions to their two-dimensional counterparts, i.e., $M^2$-QAMs over complex channels, are straightforward.

\section*{Acknowledgement}
Helpful discussions with Professor Erdal Ar\*{\i}kan are gratefully acknowledged.

\bibliographystyle{IEEE}
\begin{footnotesize}

\end{footnotesize}

\end{document}